\def\polylog{\operatorname{polylog}}
\journal{Theoretical Computer Science}
\newtheorem{theorem}{Theorem}
\newtheorem{lemma}[theorem]{Lemma}
\newtheorem{corollary}[theorem]{Corollary}
\newtheorem{problem}{Problem}
\newenvironment{proof}{\noindent {\sc Proof:}}{$\Box$ \medskip}
\newcommand{\etal}{{\em et al.}}
\newcommand{\eps}{\varepsilon}
\newcommand{\dist}{\ensuremath \mathsf{Dist}}
\newcommand{\estimate}{\ensuremath \mathsf{Estimate}}
\newcommand{\disj}{\ensuremath \mathsf{Disj}}
\newcommand{\maxdeg}{\Delta}
\newcommand{\maxedge}{J}
\newcommand{\maxv}{K}
\newcommand{\para}[1]{\medskip \noindent {\bf #1}}
\newcommand{\E}{\mathsf{E}}
\newcommand{\Var}{\mathsf{Var}}
\title{A Second Look at Counting Triangles in Graph Streams \\
(Revised)}
\author[gc]{Graham Cormode\corref{Corresponding author}\fnref{gc},   Hossein Jowhari\fnref{hj}}
\address[gc]{{\tt G.Cormode@warwick.ac.uk}, Corresponding author}
\address[hj]{{\tt hjowhari@sfu.ca}}
\begin{document}

\begin{abstract}
In this paper we present improved results on the problem of counting triangles in edge streamed graphs. 
For graphs with $m$ edges and at least $T$ triangles, we show that an extra look over
the stream yields a two-pass streaming algorithm that uses
$O(\frac{m}{\eps^{2.5}\sqrt{T}}\polylog(m))$ space and
outputs a $(1+\eps)$ approximation of the number of triangles in the graph. This
 improves upon the two-pass streaming tester of Braverman, Ostrovsky and Vilenchik, ICALP 2013, which
 distinguishes between triangle-free graphs and graphs with at least $T$ triangle 
 using $O(\frac{m}{T^{1/3}})$ space. Also, in terms of dependence on $T$, we show that
 more passes would not lead to a better space bound. In other words, we prove there
 is no constant pass streaming algorithm that distinguishes between triangle-free
 graphs from graphs with at least $T$ triangles using $O(\frac{m}{T^{1/2+\rho}})$ 
 space for any constant $\rho \ge 0$. 
\end{abstract}
\maketitle
\section{Introduction}
\label{sec:intro}
Many applications produce output in form of graphs, defined an edge at
a time. 
These include social networks that produce edges corresponding to new
friendships or other connections between entities in the network; 
communication networks, where each edge represents a communication
(phone call, email, text message) between a pair of participants; 
and web graphs, where each edge represents a link between pages. 
Over such graphs, we wish to answer questions about the induced graph,
relating to the structure and properties. 

One of the most basic structures that can be present in a graph is a
triangle: an embedded clique on three nodes. 
Questions around counting the number of triangles in a graph have been
widely studied, due to the inherent interest in the problem, and
because it is a necessary stepping stone to answering questions around
more complex structures in graphs. 
Triangles are of interest within social networks, as they indicate
common friendships: two friends of an individual are themselves
friends. 
Counting the number of friendships within a graph is therefore a
measure of the closeness of friendship activities. 
Another use of the number of triangles is as a parameter for evaluation of large
graph models~\cite{Leskovec:Backstrom:Kumar:Tomkins:08}.

For these reasons, and for the fundamental nature of the problem,
there have been numerous studies of the problem of counting or
enumerating triangles
in various models of data access: 
  external memory~\cite{Liu:Wang:Zou:Wang:10,Hu:Tao:Chung:13}; 
map-reduce~\cite{Suri:Vassilvitskii:11,Pagh:Tsourakakis:12,Tsourakakis:Kang:Miller:Faloutsos:09}; 
and 
  RAM model~\cite{Schank:Wagner:05,Tsourakakis:08}.
Indeed, it seems that triangle counting and enumeration is becoming a
{\em de facto} benchmark for testing ``big data'' systems and their
ability to process complex queries. 
The reason is that the problem captures an essentially hard problem
within big data: accurately measuring the degree of correlation. 
In this paper, we study the problem of triangle counting over
(massive) streams of edges. 
In this case, lower bounds from communication complexity can be
applied to show that exactly counting the number of triangles
essentially requires storing the full input, so instead we look for
methods which can approximate the number of triangles. 
In this direction, there has been series of works that have attempted to 
capture the right space complexity for algorithms that approximate the number
of triangles. 
However most of these works
have focused on one pass algorithms and thus, due to the hard nature of the problem, their space 
 bounds have become complicated, suffering from dependencies on
 multiple graph parameters such as maximum degree, number of 
 paths of length 2, number of cycles of length 4, etc. 
 
 In a recent work
 by Braverman \etal~\cite{Braverman:Ostrovsky:Vilenchik:13}, it has been
 shown that at the expense of an extra pass over stream, a straightforward sampling strategy
 gives a sublinear bound that depends only on $m$ (number of edges) and $T$ (a lower bound
 on the number of triangles\footnote{In this and prior works, some
   assumption on the number of triangles is required.  
This is due in part to the fact that distinguishing triangle-free
graphs from those with one or more triangle requires space
proportional to the number of edges. 
Other works have required even stronger assumptions, such as a bound
on $T_2$, the number of paths of length 2, or the maximum degree of
the graph}). More precisely \cite{Braverman:Ostrovsky:Vilenchik:13}
have shown that one extra pass yields an algorithm  that
distinguishes between triangle-free graphs from graphs with at least $T$
triangles using $O(\frac{m}{T^{1/3}})$ words of space. 
Although their algorithm does not give an estimate of the
number of triangles and more important is not clearly superior to the 
$O(\frac{m\Delta}{T})$ one pass algorithm by \cite{Pagh:Tsourakakis:12,Pavan:Tangwongsan:Tirthapura:Wu:13}
(especially for graphs with small maximum degree $\Delta$), it creates some hope that perhaps with the expense 
of extra passes one could get improved and cleaner space complexities that beat the one pass bound
for a wider range of graphs. 
In particular one might ask is there a $O(\frac{m}{T})$ space multi-pass 
algorithm? 
In this paper, while we refute such a possibility, we show that a more modest bound is possible. 
{Specifically here we show that the sampling strategy of 
\cite{Braverman:Ostrovsky:Vilenchik:13}, namely uniform sampling
of the edges at a rate of $\frac{1}{\sqrt{T}}$ in the first pass
and counting detected triangles in the second
pass gives a $O(1)$ approximation of the number
of triangles. To bring down the approximation precision
 to $1+\eps$, we use a simple summary structure for identifying 
heavy edges (edges shared by many triangles which introduce
large variance in the estimator) in order to deal with them
separately from the rest of the graph. It turns out the right threshold for
{\em heaviness} is $O(\sqrt{t/\eps})$ which can be obtained from the 
two pass constant factor approximation. In order to avoid a third pass,
we run the algorithm in parallel for different guesses of $t$ and at the end
pick the outcome of the guess that matches our constant factor 
approximation of $t$.
We remark that a similar idea has been 
used in the recent work of Eden \etal~\cite{Eden:Levi:Ron:Seshadhri:15} 
for approximately counting triangles in sublinear time.
There, the notion of heaviness is applied to nodes, not edges, and the
model allows query access to node degrees and edge presence.  
In our algorithm, we also utilize the one pass algorithm of Pagh and 
Tsourakakis \cite{Pagh:Tsourakakis:12} (explained below) as
a subroutine. Lastly,     
we observe that this $m/\sqrt{T}$ dependence is attainable in one pass
for a constant factor approximation---under the
stronger assumption of random ordering of edge arrivals.}

Furthermore, via a reduction
to a hard communication complexity problem, we demonstrate that this
bound is optimal in terms of its dependence on $T$. 
In other words there is no
constant pass algorithm that distinguishes between
triangle-free graphs from graphs with at least $T$ triangles using
$O(\frac{m}{T^{1/2+\rho}})$ for any constant $\rho > 0$. 
Our results are summarized in Figure~\ref{fig:results} and compared to
other bounds in terms of the
problem addressed, bound provided, and number of passes.

In line with prior work, we assume a simple graph---that is, each
edge of the graph is presented exactly once in the stream.
Note that our lower bounds immediately hold for the case
when edges are repeated. 

\para{Algorithms for Triangle Counting in Graph Streams.}
The triangle counting problem has attracted particular attention in
the model of graph streams: there is now a substantial body of work in this setting. 
Algorithms are evaluated on the amount of space that they require, the
number of passes over the input stream that they take, and the time
taken to process each update. 
Different variations arise depending on whether deletions of edges are
permitted, or the stream is `insert-only'; and whether arrivals are
ordered in a particular way, so that all edges incident on one node
arrive together, or arrivals are randomly ordered, or adversarially
ordered. 

The work of Jowhari and Ghodsi~\cite{Jowhari:Ghodsi:05}
first studied the most popular of these combinations: insert-only,
adversarial ordering. 
The general approach, common to many streaming algorithms, is to build
a randomized estimator for the desired quantity, and then repeat this
sufficiently many times to provide a guaranteed accuracy. 
Their approach begins by sampling an edge uniformly from the stream of
$m$ arriving edges on $n$ vertices. 
Their estimator then counts the number of triangles incident on a
sampled edge. 
Since the ordering is adversarial, the estimator has to keep track of
all edges incident on the sampled edge, which in the worst case
is bounded by $\maxdeg$, the maximum degree. 
The sampling process is repeated 
$O(\frac{1}{\eps^2} \frac{m\maxdeg}{T})$ times (using the assumed
lower bound on the number of triangles, $T$), leading to a total space
  requirement proportial to 
$O(\frac{1}{\eps^2} \frac{m\maxdeg^2}{T})$ to give an $\eps$
  relative error estimation of $t$, the (actual) number of triangles in the graph.
The parameter $\eps$ ensures that the error in the count is at most 
$\eps t$ (with constant probability, since the algorithm is
randomized). 
The process can be completed with a single pass over the input.
Jowhari and Ghodsi also consider the case where edges may be deleted,
in which case a randomized estimator using ``sketch'' techniques is
introduced, improving over a previous sketch algorithm due to
Bar-Yossef \etal~\cite{Bar-Yossef:Kumar:Sivakumar:02}.

The work of Buriol \etal~\cite{Buriol:Frahling:Leonardi:Marchetti-Spaccamela:Sohler:06}
also adopted a sampling approach, and built a one-pass estimator with smaller
working space. 
An algorithm is proposed which samples uniformly an edge from the
stream, then picks a third node, and scans the remainder of the stream
to see if the triangle on these three nodes is present. 
Recall that $n$ is the number of nodes in the graph, $m$ is number of edges, 
and $T\le t$ is lower bound on the (true) number of triangles. 
To obtain an accurate estimate of the number of triangles in the graph,
this procedure is repeated independently 
$O(\frac{mn}{\eps^2T})$ times to achieve $\epsilon$ relative error. 

Recent work by Pavan \etal~\cite{Pavan:Tangwongsan:Tirthapura:Wu:13}
extends the sampling approach of Buriol \etal: instead of picking a
random node to complete the triangle with a sampled edge, their
estimator samples a second edge that is incident on the first sampled
edge. 
This estimator is repeated $O(\frac{m \maxdeg}{\eps^2 T})$ times, 
where $\maxdeg$ represents the maximum degree of any node. 
That is, this improves the bound of Buriol \etal\ by a factor of 
$n/\maxdeg$. 
In the worst case, $\maxdeg=n$, but in general we expect $\maxdeg$ to
be substantially smaller than $n$.

Braverman \etal~\cite{Braverman:Ostrovsky:Vilenchik:13} take a
different approach to sampling. 
Instead of building a single estimator and repeating, their algorithms
sample a set of edges, and then look for triangles induced by the
sampled edges. 
Specifically, an algorithm which takes two passes over the input
stream distinguishes triangle-free graphs from those with $T$
triangles in space $O(mT^{-1/3})$. 

{For graphs with $W \ge m$ where $W$ is the number of wedges (paths of length 2), 
Jha \etal~\cite{Jha:Seshadhri:Pinar:13} have shown
a single pass $O(\frac{m}{\eps^2\sqrt{T}})$ space algorithm that
returns an additive error estimation of the number of triangles where the 
estimation error is bounded by $\eps W$.}     

Pagh and Tsourakakis~\cite{Pagh:Tsourakakis:12} propose an algorithm
in the MapReduce model of computation, which depends on the maximum
number of triangles on a single edge ($J$).
However, it can naturally be adapted to the streaming setting. 
{As described in Section \ref{sec:upperbounds},
we make use of this algorithm as a subroutine in the design of
our two pass algorithm.
The space used by this algorithms scales as
$O(\frac{mJ}{T} + \frac{m}{\sqrt{T}})$.
}

\para{Lower bounds for triangle counting.}
A lower bound in the streaming model is presented by
Bar-Yossef \etal~\cite{Bar-Yossef:Kumar:Sivakumar:02}.
They argue that there are (dense) families of graphs over $n$ nodes
such that any algorithm that approximates the number of triangles must
use $\Omega(n^2)$ space. 
The construction essentially encodes $\Omega(n^2)$ bits of
information, and uses the presence or absence of a single triangle to
recover a single bit. 
Braverman \etal~\cite{Braverman:Ostrovsky:Vilenchik:13}
show a lower bound of $\Omega(m)$ by demonstrating a family of graphs
with $m$ chosen between $n$ and $n^2$. 
Their construction encodes $m$ bits in a graph, then adds $\tau$ edges
such that there are either $\tau$ triangles or 0 triangles, which reveal
the value of an encoded bit. 

For algorithms which take a constant number of passes over the input stream, 
Jowhari and Ghodsi~\cite{Jowhari:Ghodsi:05} show that still 
$\Omega(n/T)$ space is needed to approximate the number of triangles
up to a constant factor, based on a similar encoding and testing argument. 
Specifically, they create a graph that encodes two binary strings, so
that the resulting graph has $T$ triangles if the strings are
disjoint, and $2T$ if they have an intersection. 
In a similar way, 
Braverman \etal~\cite{Braverman:Ostrovsky:Vilenchik:13}
encode binary strings into a graph, so that it either has no triangles
(disjoint strings) or at least $T$ triangles (intersecting strings). 
This implies that $\Omega(m/T)$ space is required to distinguish the
two cases. 
In both cases, the hardness follows from the communication complexity
of determining the disjointness of binary strings. 

\para{Revision note.}
{
This paper is a revision of an earlier version which claimed the same
main two pass dependence on $T$.
However, the algorithm presented in the earlier version can only
obtain a constant factor approximation.
In this revision, our modified algorithm is based on the same general idea,
that `heavy' edges with many incident triangles are those that prevent
simple sampling-based algorithms from succeeding, and handling such
heavy edges separately can allow accurate algorithms.
Our modified algorithm more directly handles heavy edges, and so can
provide the claimed bounds.}

\section{Preliminaries and Results}
\label{sec:prelims}

\begin{figure}
\centering
\renewcommand{\arraystretch}{1.1}
\begin{tabular}{|c|l|}
\hline
$n$ & number of vertices \\
$m$ & number of edges \\
$t(G)$ & number of triangles in graph $G$\\
$T$ & lower bound on $t(G)$ \\
$\eps$ & relative error \\
$\delta$ & probability of error \\
$\maxdeg$ & maximum degree\\
$t(e)$ & number of triangles that share the edge $e$\\
$\maxedge$ &$ \max_{e\in E} t(e)$\\
$\maxv$ & maximum number of triangles incident on a vertex\\
\hline
$\dist(T)$ & Distinguish graphs with $T$ triangles from
triangle-free graphs\\
$\estimate(T,\eps)$ & $1+\eps$ approximate the number of triangles when there
are at least $T$ \\
$\disj_p^{r}$ & Determine if two length $p$ bitstrings of weight $r$
intersect\\
\hline
\end{tabular}
\caption{Table of notation}
\label{fig:notation}
\end{figure}

In this section, we define additional notation and define the problems
that we study. 

As mentioned above, we use
$t(G)$ to denote the number of triangles in a graph $G = (V,E)$. 
Let $\maxedge(G)$ denote the maximum number of triangles that share an
edge in $G$, and
$\maxv(G)$ the maximum number incident on any vertex. 
We use $t$, $\maxedge$ and $\maxv$ when $G$ is clear from the context. 
 
\para{Problems Studied.}
We define some problems related to counting the number of
triangles in a graph stream. 
These all depend on a parameter $T$ that gives a promise on the number of
triangles in the graph. 

  $\dist(T)$: Given a stream of edges, distinguish
graphs with at least $T$ triangles from triangle-free graphs.


  $\estimate(T,\epsilon)$: Given the edge stream of a graph with at least 
$T$ triangles, output $s$ where  $(1-\epsilon)\cdot t(G)\le  s \le (1+\epsilon) \cdot t(G) $. 

Observe that any algorithm which promises to approximate the number of
triangles for $\epsilon<1$ must at least be able to distinguish the case of
0 triangles or $T$ triangles.  
Consequently, we provide lower bounds for the $\dist(T)$ problem, and
upper bounds for the $\estimate(T,\epsilon)$ problem. 
Our lower bounds rely on the hardness of well-known problems from
communication complexity.  
In particular, we make use of the hardness of $\disj_p^r$: 


\begin{problem}
The  $\disj_{p}^{r}$ problem involves two players, Alice and Bob,
who each have binary vectors of length $p$. 
Each vector has Hamming weight $r$, i.e. $r$ entries set to one. 
The players want to distinguish non-intersecting
inputs from inputs that do intersect. 
\end{problem}

This problem is
``hard'' in the (randomized) communication complexity
setting: it requires a large amount of communication between the
players in order to provide a correct answer with sufficient
probability~\cite{Kushilevitz:Nisan:97}. 
Specifically, 
$\disj_{p}^{r}$ requires $\Omega(r)$ bits of communication for any
$r\le p/2$, over multiple rounds of interaction between Alice and
Bob. 

\begin{figure}
\centering
\renewcommand{\arraystretch}{1.2}
\begin{tabular}{|l|l|l|l|}
\hline 
Problem & Passes & Bound & Reference \\
\hline
$\dist(T)$ & 1 & $\Omega(m)$ & \cite{Braverman:Ostrovsky:Vilenchik:13} \\
$\dist(T)$ & $O(1)$ & $\Omega(m/T)$ & \cite{Braverman:Ostrovsky:Vilenchik:13} \\
$\dist(T)$ & 2 & $O(\frac{m}{T^{1/3}})$ & \cite{Braverman:Ostrovsky:Vilenchik:13} \\
$\estimate(T,\eps)$ & 1 & 
$O(\frac{1}{\eps^2} \frac{m \maxdeg}{T})$ & 
\cite{Pavan:Tangwongsan:Tirthapura:Wu:13} \\
$\estimate(T,\eps)$ & 1 &
$O(\frac{1}{\eps^2}(\frac{m\maxedge}{T} + \frac{m}{\sqrt{T}}))$ &
  \cite{Pagh:Tsourakakis:12} \\
\hline
$\estimate(T,\eps)$ & 2 & $\tilde{O}(\frac{m}{\eps^{2.5}{\sqrt{T}}})$ & Theorem~\ref{thm:sqrtt} \\
$\dist(T)$ & $O(1)$ & $\Omega(\frac{m}{T^{2/3}})$ & Theorem~\ref{thm:lb1} \\
$\dist(T)$ & $O(1)$ & $\Omega(\frac{m}{\sqrt{T}})$ for $m=\Theta(n\sqrt{T})$ & Theorem~\ref{thm:lb2} \\
\hline
\end{tabular}
\caption{Summary of results}
\label{fig:results}
\end{figure}

\para{Our Results.}
We summarize the results for this problem discussed in
Section~\ref{sec:intro}, and include our new results, in
Figure~\ref{fig:results}.
{
We observe that, in terms of dependence on $T$, we achieve tight
bounds for 2 passes: 
Theorem~\ref{thm:sqrtt} shows that we can obtain a dependence on
$T^{-1/2}$, and Theorem~\ref{thm:lb2} shows that no improvement for
constant passes as a function of $T$ can be obtained. 
It is useful to contrast to the results of \cite{Pagh:Tsourakakis:12},
where a one pass algorithm achieves a dependence of $\frac{m}{T^{1/2}}$, but
has an additional term of $\frac{m\maxedge}{T}$. 
This extra term can be large: as big as $m$ in the case that all
triangles are incident on the same edge; here, we show that this term
can be avoided at the cost of an additional pass, in order to identify
edges with more than $\sqrt{T}$ triangles and handle them separately. 
Our results improve over the 2-pass bounds given in~\cite{Braverman:Ostrovsky:Vilenchik:13}.  
Comparing with the additive estimator of \cite{Jha:Seshadhri:Pinar:13}, while our sampling 
strategy is somewhat similar, using an extra pass over the stream we return
a relative error estimation of the number of triangles. 
}

Our analysis assumes familiarity with techniques from randomized algorithms:
first, second and exponential moments methods, in the form of the Markov
inequality, Chebyshev inequality, and Chernoff bounds~\cite{Motwani:Raghavan:95}.


\section{Upper bounds}
\label{sec:upperbounds}
{In this section, we provide an upper bound in the form of
a randomized algorithm which succeeds with constant probability.
We begin by describing a 2-pass algorithm that outputs a constant
factor approximation of $t(G)$ using a lower bound $T$ on $t(G)$ (Section~\ref{sec:const}).
Next we describe our main algorithm that uses the constant factor 
approximation algorithm as a sub-procedure and a summary of the graph
(computed in the first pass) to improve the approximate factor to
$1+\eps$ (Section~\ref{sec:rel}).
We make use of the 
following result by Pagh and Tsourakakis~\cite{Pagh:Tsourakakis:12}.}
 
\begin{lemma}[\cite{Pagh:Tsourakakis:12}]
\label{lem:PT} Given a simple graph $G$ and arbitrary integer $T$, 
there is a one-pass randomized streaming algorithm that outputs 
$t'$ such that $|t'-t(G)|\le \max{\{\eps T,\eps t(G)\}}$. The expected space
usage of the algorithm is  
$\Tilde{O}({1 \over \eps^2}({mJ \over T}+{m \over \sqrt{T}}))$, where
$\maxedge$ denotes the maximum number of triangles incident on a
single edge.   
 \end{lemma}

The algorithm of Lemma~\ref{lem:PT} works by conceptually assigning a
``color'' to each vertex randomly from $C$ colors (this can be
accomplished in the streaming setting with a suitable hash function, for example).
The algorithm then stores each monochromatic edge, i.e. each edge from the input
such that both vertices have the same color. 
Counting the number of triangles in this induced graph, and scaling up
by a factor of $C^2$ gives an estimator for $t$. 
The space used is $O(m/C)$ in expectation. 
Setting $C$ appropriately yields a one-pass algorithm with space
$\tilde{O}(\frac1{\eps^2}({m \over T} \maxedge + {m \over \sqrt T}))$.


\subsection{Constant-factor approximation}
\label{sec:const}
{
The following simple lemma is a key observation in our algorithms. 
Here $E_h$ is the set of all edges $e \in E$ with $t(e) \ge h$.}

\begin{lemma}
\label{lem:heavy}
The number of triangles that contain two or three edges from the
set $E_h$ is less than $({3t\over h})^2$. 
 \end{lemma}
 
 \begin{proof}
{
   From $\sum_{e \in E}t(e) =3t$ it follows that $|E_h| \le {3t\over h}$. Since every 
 two distinct edges belong to at most one triangle, the number of triangles that contain
 two or more edges from $E_h$ is at most ${3t/h \choose 2} < ({3t\over h})^2. $}
 \end{proof}

\begin{algorithm}[t]

Repeat the following  $l \ge 16/\eps$ 
times independently in parallel and output the minimum of the outcomes.\\
{\em Pass 1.} Pick every edge with probability 
$p=O(\frac1{\eps^{4.5}\sqrt{T}})$ (with large
enough constants).\\
{\em Pass 2.} Define $r$ to be the number of triangles that are
observed where two edges were sampled in the first pass, and the
completing edge is seen in the second pass.  
Output  ${r \over {3p^2(1-p)}}$. 
 \caption{The $(3+\eps)$ Algorithm}
 \label{alg:const}
\end{algorithm}

Algorithm~\ref{alg:const} describes our two-pass, $(3 +
\eps)$-factor approximation algorithm.  
Any use of this algorithm will set $\eps$ to be a constant, but for
completeness our analysis makes explicit the dependence on $\eps$.  

\begin{theorem} 
\label{thm:sqrtt}
Algorithm~\ref{alg:const} is a 2-pass randomized streaming algorithm
that 
uses $O( {\frac {m}{\eps^{4.5}\sqrt{T}}})$ space in expectation and 
outputs a $(3+\eps)$ factor approximation of $t$ with constant probability. 
\end{theorem}

\begin{proof} 
Let $\cal{T}$ represent the set of triangles in the graph. 
Consider one instance of the basic estimator, and let
$X$ be the outcome of this instance. 
Let $X_i$ denote the indicator random
variable associated with the $i$th triangle in $\cal{T}$ being detected. 
By simple calculation, we have $\Pr[X_i=1]=3p^2(1-p)$
and $\E(X)=\frac1{3p^2(1-p)} \sum_{i \in \cal{T}}X_i =t$.
Thus, $X$ is an unbiased estimator for $t$; however, $R$, which is the
minimum of $l$ independent repetitions of $X$, is biased. 
By the Markov inequality, $\Pr[X \ge (1+\eps)\E(X)] \le 1/(1+\eps)$.
Therefore, picking $\eps \leq 1$, we can conclude, 
$$\Pr[R \le (1+\eps)t] \ge (1 - \Pr[X \ge (1+\eps)t]^{16/\eps})
\ge 1- \frac12^{16}\ge 1-10^{-4}.$$

However, proving a lower bound on $R$ is more complex, and requires a
more involved analysis. 
First, we notice that most triangles share an edge with a limited number
of triangles.
Let $h=\sqrt{9t \over \eps}$. We call $E_h$ the set of heavy edges.  
Let ${\cal{T}}_h$ denote the set of triangles with two or 
three heavy edges. From Lemma \ref{lem:heavy}, we have that 
 $|{\cal{T}}_h| < \eps t$.
 
  Let $S={\cal{T}}/{\cal{T}}_h$. For each triangle $i \in S$, fix two of its light (non-heavy) edges. 
Let $Y_i$ denote the indicator random variable for the event where
the algorithm picks these two light edges of $i \in S$ in the first pass.
We have $\E(Y_i) = p^2$ and always $Y_i \le X_i$.
Let $Y=\frac1{p^2}\sum_{i \in S} Y_i.$ Assuming $p<1$, by definition we have 
$1/3Y < X$. Therefore a lower bound on $Y$ will give us a lower bound on $X$.
We have $$\E(Y) = |S| \ge (1-\eps)t.$$
Also, \[\Var(Y) = \E(Y^2)-\E^2(Y) \le \frac1{p^2}|S| +
\frac1p|S|\sqrt{t/\eps}.\] 
The first term comes from 
$\sum_{i \in S} \frac1{p^4}\E(Y_i^2)$, and the 
second term arises from pairs of
triangles which share a light edge, of which there are at most
$|S|\sqrt{t/\eps}$ (since the edge is light), and which are both
sampled with probability $p^3$.
Using the Chebyshev inequality and assuming $\eps < \frac12$, we have

\begin{eqnarray*}
 \Pr[Y< (1-\eps)^2t] &\le & \Pr[Y < (1-\eps)|S| ]  \\
   &\le & {\Var(Y) \over {\eps^2|S|^2}}  \\
   &\le & \frac1{\eps^2}\left({1\over p^2|S|} +{\sqrt{t/\eps}\over {p|S|}}\right)  \\
   &< & \frac1{\eps^2}\left({2\over p^2t}
    +{2\over {p\sqrt{\eps t}}}\right).
\end{eqnarray*}

 Since $T \le t$, setting $p>\frac{320}{\eps^{3.5}\sqrt{T}}$,
 allows the above probability to be bounded by $\frac{\eps}{160}$. 
Now the probability that the minimum of $16/\eps$ independent
trials is below the designated threshold is at most
$\frac{\eps}{160}\frac{16}{\eps} =1/10$. 
Therefore with probability at least $1-(1/10^{-4} + 1/10)$ the output of 
the algorithm is within the interval $[1/3(1-2\eps)t,(1+\eps)t]$.
 This proves the statement of our theorem.  
\end{proof}

{
It can be shown the above analysis is tight~\footnote{An earlier
  version of this paper erroneously claimed that this algorithm
  achieved a $1+\eps$ approximation; this example shows that this is
  not the case.}. Consider the
``crown''-like graph where $t$ triangles share an edge shown in Figure~\ref{fig:crown}. If we sample edges at a rate of
$p=O(\frac1{\sqrt{t}})$, the bottom edge is picked with probabilty $p$, an unlikely
event. That means the random variable $r$ will be concentrated around $tp^2$ which divided
by $3p^2(1-p)$ gives roughly $t/3$ as the estimate for number of triangles. On the other
hand, for $t$ disjoint triangles, $r$ is concentrated around $3tp^2(1-p)$ which divided by $3p^2(1-p)$ 
gives the right estimate for the number of triangles.} 

\begin{figure}[t]
\centering
\begin{tikzpicture}

\node (e) at (3,2.7) [] {$t$};
\draw[decorate,decoration={brace},thick] (0,2.3) -- (6,2.3);
        
\node (a) at (2,0) [draw,circle,fill=black,scale=0.7] {};
\node (b) at (4,0) [draw,circle,fill=black,scale=0.7] {};

\node (c) at (0,2) [draw,circle,fill=black,scale=0.7] {};
\node (d) at (1,2) [draw,circle,fill=black,scale=0.7] {};
\node (d) at (2,2) [draw,circle,fill=black,scale=0.7] {};
\node (e) at (3,2) [] {...};
\node (d) at (4,2) [draw,circle,fill=black,scale=0.7] {};
\node (f) at (5,2) [draw,circle,fill=black,scale=0.7] {};
\node (g) at (6,2) [draw,circle,fill=black,scale=0.7] {};

\draw (2,0) -- (4,0);

\draw (2,0) -- (6,2);
\draw (2,0) -- (5,2);
\draw (2,0) -- (4,2);
\draw (2,0) -- (2,2);
\draw (2,0) -- (1,2);
\draw (2,0) -- (0,2);

\draw (4,0) -- (6,2);
\draw (4,0) -- (5,2);
\draw (4,0) -- (4,2);
\draw (4,0) -- (2,2);
\draw (4,0) -- (1,2);
\draw (4,0) -- (0,2);

\end{tikzpicture}
\caption{``Crown''-like graph}
\label{fig:crown}
\end{figure}
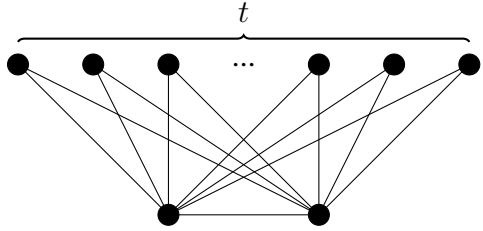

\subsection{Heavy-estimate data structure}

{
Next we describe a simple summary structure of the graph which we refer
to by $SE(q,l)$ here. An instance of
$SE(q,l)$ can be computed in one pass and can be used to decide whether an 
arbitrary edge of the graph is heavy or not in an approximate fashion.
It is formed as a collection of $l$ sets of edges $N_i$, chosen by
sampling as described in Algorithm~\ref{alg:buildse}.
These sampled edges are used to estimate whether a given edge $e$
meets the heaviness condition, using Algorithm~\ref{alg:heavy}
}

\begin{algorithm}[t]
For $i \in [l]$, in parallel.
\begin{itemize}
\item
  Sample each node independently with probability $q$ into $S_i$. 
\item
  $N_i \gets$ all edges incident on any node in $S_i$
\end{itemize}
\caption{Constructing the summary structure $SE(q,l)$}
\label{alg:buildse}
\end{algorithm}


\begin{algorithm}[t]
Given edge $e=(u,v)$ and summary structure $SE(q,l)$ (sampled
edge sets $N_i$):

For each $i\in [l]:$

\hspace{6mm}  $r_i(e) \gets |\{ w | (u,w) \in N_i \wedge (v,w) \in
N_i\}|$

\hspace{6mm}({\em count the number of triangles formed between $e$ and $N_i$})

Return $t'(e) = \operatorname{median}_i r_i(e)/q$ as estimate for $t(e)$

\caption{The heavy-estimate($e$) procedure}
\label{alg:heavy}
\end{algorithm}
 
{We prove the following lemma regarding Algorithm~\ref{alg:heavy}. }
\begin{lemma}
\label{lem:heavy-estimate}

Let $q \ge {16 \over \eps^2d}$ and $l=c\log n$ for some constant $c$. The procedure 
 heavy-estimate($e$) defined by Algorithm~\ref{alg:heavy} can be used to decide whether $t(e) \ge d$ or 
$t(e) < \frac{1}{4} d$ with high probability. Moreover for every edge with
$t(e) \ge d/4$, $t'(e)$ approximates $t(e)$
within a $(1+\eps)$ factor.
\end{lemma}

\begin{proof}
{ Fix an ordering on the triangles 
sharing $e$ and let $X_{i,j}$ be the random variable 
corresponding to the event of the $j$th triangle on $e$ being counted
in the $i$th edge set $N_i$.
That is, the $j$th triangle on $e$ is defined by the nodes $\{u, v,
w\}$, and $X_j$ is 1 if $w \in S_i$
 We have $\E[X_j]=q$ and
thus $\E[r_i(e)=\sum_{j=1}^{t(e)} X_j]=qt(e).$ Therefore, $\E[r(e)/q]=t(e)$. 
By a Chernoff bound,
$$\Pr[|t'(e)-t(e))|>\eps t(e))]\le e^{-\frac{\eps^2qt(e)}{4}}.$$
}

{Therefore, for $t(e) \ge d/4$ {and choosing $q \ge \frac{16}{\eps^2d}$},
each estimate $t'(e)$ is close to the correct value with constant
probability greater than $1/e$. 
Hence, taking the median of $O(\log n)$ instances gives us a value within
the desired bounds with probability $1 - O(1/n^2)$, via a standard Chernoff
bound argument. }

{On the other hand, for edges with $t(e) < \frac14d$, the Markov inequality
implies that $\Pr[r(e)/q > d] < 1/4$.
The probability that the median
of  $\Theta(\log n)$ repetitions of the estimator goes beyond $d$ is $O(\frac1{n^2})$.   }
\end{proof}

\subsection{Relative error approximation}
\label{sec:rel}

To obtain a relative error guarantee, we overlap the execution of
three algorithms in two passes, as detailed in
Algorithm~\ref{alg:rel}.
Note that to optimize the dependence on $\eps$, the parameters used to
invoke each of the Algorithms~\ref{alg:const} and \ref{alg:heavy}  are
carefully chosen. 

\begin{algorithm}[t]

Do the following tasks in parallel:

\begin{itemize}
\item
  Run Algorithm~\ref{alg:const} with $\eps=1/12$
to find $t'$ such that $1/4t \le t' \le t$.

\item
  Run Algorithm~\ref{alg:heavy} to compute $SE(q,l)$ for $q=\Theta({\eps^{-1.5}T^{-0.5}})$ 
  and $l=\Theta(\log n)$.

\item
Let $B = \{T, 2T, 4T, ..., 2^iT\}$ where $i$ is the smallest integer such that 
$2^iT \ge n^3$.

In the second pass, instantiate $|B|$ parallel instances of 
the algorithm PT$_b$ with parameters $T=b, J=24\sqrt{b/\eps}$ for all $b \in B$. 
Also initiate counters $\{h_b\}_{b \in B}$ with zero.

Upon receiving the edge $e \in E$, first compute $t'(e)$ using the
heavy-estimate procedure. For all $b \in B$, 
if $t'(e) \ge d=24\sqrt{{b}/{\eps}}$, 
we add $t'(e)$ to the global counter $h_{b}$, otherwise 
we feed $e$ to PT$_b$.

At the end of the pass, let $t_b$ be the 
output of PT$_b$. We output $h_j+t_j$ as the final estimate
for the number of triangles where $j=\max{\{b\in B \, | \, b \le t'\}}$.  
\end{itemize}
\caption{Relative Error Algorithm}
\label{alg:rel}
\end{algorithm}


\begin{theorem}
Algorithm~\ref{alg:rel} is a 2-pass randomized streaming algorithm that takes
$O(\frac{m}{\eps^{2.5}\sqrt{T}}\polylog(n))$ space in 
expectation and outputs a $(1+\eps)$
factor approximation of $t(G)$.
\end{theorem}

\begin{proof}

In the following, for the sake of simplicity in exposition, we
assume the randomized procedures used in the algorithm do not err. 
With appropriate choice of parameters the total error probability
can always be bounded by a constant smaller than $1/2$.  

As in Algorithm~\ref{alg:rel}, let  $j=\max{\{b\in B \, | \, b \le t'\}}$.
 Let $E_j$ be the set of edges that are identified as heavy by the algorithm for
parameter $j$. Also let $G_j$ be the graph $G$ after removing the edges $E_j$. 
Let $t(E_j)$ be the number of triangles in $G$ that share at least an edge with $E_j$.
Clearly $t(G)=t(E_j)+t(G_j)$. 

First, we prove that $h_j$ is indeed a $1+O(\eps)$ approximation of
$t(E_j)$. A source of error comes from the fact that we will be over-counting
triangles that have more than one heavy edges. We show that number of
such triangles is limited. To see this, 
observe that with the choice of parameters $q =
\Theta(\eps^{-1.5}T^{-0.5})$
and $d=3\sqrt{\frac{t}{2\eps}}$
in Lemma \ref{lem:heavy-estimate}, 
it follows that for each $e \in E_j$ we have
 $t(e) \ge 6\sqrt{\frac{j}{\eps}}$ and
 consequently $t(e) \ge 3\sqrt{{{t}\over {2\eps}}}.$ The latter follows from the fact that $t/8 \le j\le t$.
But, by Lemma \ref{lem:heavy}, the number of triangles that have 
two or three heavy edges that are shared by more than 
$3\sqrt{\frac{t}{2\eps}}$ is at most $\eps t$. 
Another source of error comes from estimation errors $|t(e)-t'(e)|$.
This is also negligible since, by Lemma \ref{lem:heavy-estimate}, for every identified
heavy edge $e$, we have $|t(e)-t'(e)|$ bounded by $\eps t(e)$.

 On the other hand, the maximum number of triangles on an edge
 for graph $G_j$ is at most $O(\sqrt{\frac{j}{\eps}})$. Hence
  by Lemma \ref{lem:PT}, $t_j$ estimates $t(G_j)$ within
 $\eps t$ additive error using $O(\frac{m}{\eps^{2.5}\sqrt{t}})$ space. 
 Consequently $h_j+t_j$ estimates
 $t(G)$ within $4\eps t$ additive error.
Rescaling $\eps$ gives the desired result.

 It remains to show that the expected space usage of the 
 algorithm is bounded as claimed. The $SE(q,l)$ summary
 takes $O(nq\frac{m}{n})=O(\frac {m}{\eps^{1.5}\sqrt{T}})$
 in expectation. The constant factor approximation takes $O(m/\sqrt{T})$ space. 
 The instance of the PT algorithm with parameter
 $b$ takes $O(\frac{m\sqrt{b/\eps}}{\eps^2b}+\frac{m}{\eps^2\sqrt{b}})$ in
 expectation which is bounded by $O(\frac{m}{\eps^{2.5}\sqrt{T}})$ as $T \le b$. 
\end{proof}

\subsection{One pass algorithm}

{It is natural to ask whether this algorithm can be reduced to a single
pass.
There are several obstacles to doing so.
Primarily, we need to determine for each edge whether or not it is heavy, and
handle it accordingly.
This is difficult if we have not yet seen the subsequent edges which
make it heavy.
We can adapt Algorithm~\ref{alg:const} to one pass to obtain a constant
factor approximation, under the assumption of a randomly ordered
stream. }

\begin{corollary} 
\label{cor:sqrtt}
Assuming the data arrives in random order, 
there is a one-pass randomized streaming algorithm that 
returns a $1/3+\eps$ factor approximation of 
$t(G)$ that 
uses $O( {\frac {m}{\eps^{4.5}\sqrt{T}}})$ space. 
\end{corollary}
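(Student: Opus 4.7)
The plan is to collapse the two passes of Algorithm II into a single pass under the random-order assumption, following the pattern used to prove the corollary after Theorem~\ref{thm:smalleps}. The one-pass algorithm samples each arriving edge into $G'$ with probability $p$ and, for every incoming edge $e$ (whether sampled or not), increments a counter $r$ by the number of triangles that $e$ completes with pairs of edges already in $G'$. As in Algorithm II, we run $l = 16/\epsilon$ copies in parallel and output the minimum of the rescaled estimates.

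First I would recompute the detection probability. Under random order, a triangle $i$ is counted exactly when its last-arriving edge is processed while the other two are already in $G'$. Decomposing by the number of sampled edges of $i$: ``all three sampled'' contributes $p^3$, and ``exactly two sampled'' requires the unsampled edge to arrive last, contributing $3p^2(1-p) \cdot (1/3) = p^2(1-p)$. Summing yields $\Pr[X_i = 1] = p^2$, so the natural normalization is $r/p^2$, giving an unbiased estimator with expectation $t$. The Markov-based upper bound of Theorem~\ref{thm:sqrtt} then carries over unchanged, so the minimum of the $l$ copies is at most $(1+\epsilon)t$ with probability at least $7/8$.

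For the Chebyshev-based lower bound I would adjust the auxiliary indicator so that the ordering constraint is embedded: let $Y_i = 1$ iff the two fixed light edges of $i \in S$ are both sampled AND the third edge of $i$ is the last of the three to arrive. Then $\Pr[Y_i = 1] = p^2/3$ by independence of sampling and the uniform ordering, and $Y_i \le X_i$ still holds because under $Y_i = 1$ the two light edges are already in $G'$ when the third edge is processed. The variance analysis mirrors Theorem~\ref{thm:sqrtt}: for pairs $i \neq j \in S$ whose fixed light edges overlap, the joint event $Y_i Y_j = 1$ requires three specific edges to be sampled (contributing $O(p^3)$) together with ordering constraints that we bound trivially by $1$; for edge-disjoint pairs, $Y_i$ and $Y_j$ are independent because the relative orderings on disjoint three-element subsets of a uniform random permutation are independent. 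The resulting variance bound matches that of Theorem~\ref{thm:sqrtt} up to constant factors, so taking $p$ at a slightly larger constant multiple of $1/(\epsilon^{3.5}\sqrt{T})$ suffices to apply Chebyshev and the union bound over the $l$ trials.

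The main obstacle is handling the new coupling between edge sampling and the random permutation in the variance computation. In particular, one must check that the ordering constraints appearing in $\E(Y_i Y_j)$ for edge-sharing pairs can safely be dropped (upper-bounded by $1$) when reducing to the probability of the underlying sampling event, and that edge-disjoint triangles yield independent $Y$-variables despite sharing the global permutation; both facts are elementary consequences of the uniform distribution of relative orderings in a random permutation. Since $p$ is only rescaled by a constant, the asymptotic space bound $O(m/(\epsilon^{4.5}\sqrt{T}))$ matches that of Theorem~\ref{thm:sqrtt}.
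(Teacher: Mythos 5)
Your proposal is correct and follows essentially the same route as the paper's proof: collapse the two passes, observe that the detection probability becomes $p^3 + p^2(1-p) = p^2$ so the estimator is $r/p^2$, and redo the Chebyshev lower bound with the ordering-aware indicator $Y_i$ satisfying $\Pr[Y_i=1]=p^2/3$ and $Y_i \le X_i$. Your extra detail on the variance computation (bounding the ordering constraints by $1$ for edge-sharing pairs and using independence of relative orders on disjoint triples of a uniform permutation) merely fills in steps the paper leaves implicit.
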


\begin{proof}
Under random order, we can combine the first and second passes of
Algorithm~\ref{alg:const}. 
That is, we sample edges with probability $p$, and look for triangles
observed based on the stream and the sample. 
We count all triangles formed as $r$: either those with all three edges
sampled, or those with two edges sampled and the third observed
subsequently in the stream.  
The estimator is now $\frac{r}{p^2}$, 
since the probability of counting any triangle
is $p^3$ (for all three edges sampled) plus $p^2(1-p)$ (for the first
two edges in the stream sampled, and the third unsampled). 
The same analysis as for Theorem~\ref{thm:sqrtt} then follows: 
we partition the  edges in to light and heavy sets, and bound the
probability of sampling a subset of triangles. 
A triangle with two light edges is counted if both light edges are
sampled, and the heavy edge arrives last. 
This happens with probability $p^2/3$. 
We can nevertheless argue that we are unlikely to undercount such
triangles, following the same Chebyshev analysis as above. 
This allows us to conclude that the estimator is good. 
\end{proof}

We emphasize random order is critical to make this algorithm work in one
pass: an adversarial order could arrange the heavy edges to always
come last (increasing the probability of counting a triangle under
this analysis) or always first (giving zero probability of counting a
triangle under this analysis).

\subsection{Non-simple graphs}

All our algorithms can be modified to work when the graphs are not
simple.
That is, we may see the same edge multiple times in the graph stream,
but are only interested in counting each unique triangle once.
We need two tools to accomplish this: (1) hash functions which map nodes
or edges to real numbers in the range $[0\ldots 1]$ and can be treated
as random (2) count-distinct algorithms which can approximate the number of
unique items (tuples of nodes) that are passed to them, up to a
$(1+\eps)$ factor.

The transformation of the algorithms is to replace sampling with
hashing and testing if the hash value is less than the threshold $p$.
This has the effect of sampling each unique edge (or node) with
probability $p$.
We replace counting triangles with a count-distinct of the triangles.

For example, Algorithm~\ref{alg:const} uses hashing to determine which
(distinct) edges to sample in pass 1, then approximately counts the
set of distinct triangles in pass 2.
The one-pass algorithm of Lemma~\ref{lem:PT} can correspondingly be
modified, as can Algorithm~\ref{alg:heavy}.
The main change needed for 
Algorithm~\ref{alg:rel} is that we should extract the {\em set} of triangles
counted in each invocation of Algorithm~\ref{alg:heavy}, and pass
these to an instance of a count-distinct algorithm $h_b$.
Consequently, our results also apply to the case of repeated edges.
The space cost increases due to replacing counters with approximate
counters.
In each algorithm, the number of instances of count-distinct
algorithms is small.
Hence, these modifications 
increase the space cost by an additional $\tilde{O}(1/\eps^2)$, which
does not change the asymptotic bounds. 

\section{Lower bounds}

We now show lower bounds for the problem $\dist(T)$, to distinguish 
between the case $t=0$ and $t\ge T$. 
Our first result builds upon a lower bound from prior work, and amplifies the
hardness. 
We formally state the previous result:

\begin{lemma} \cite{Braverman:Ostrovsky:Vilenchik:13}
\label{lem:bovlb2}
Every constant pass streaming algorithm for $\dist(T)$ 
requires $\Omega(\frac mT)$ space.
\end{lemma}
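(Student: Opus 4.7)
The plan is to reduce the $\disj_p^r$ problem to $\dist(T)$ via a gadget graph, so that the $\Omega(r)$ randomized communication lower bound transfers to a streaming space lower bound. The standard simulation turns a $P$-pass streaming algorithm with space $s$ into a communication protocol in which Alice and Bob exchange their memory states between passes, using $O(Ps)$ bits in total; for constant $P$ this is $O(s)$.

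For the encoding I would use three classes of vertices: $U = \{u_1,\dots,u_p\}$, $V = \{v_1,\dots,v_p\}$, and $W = \{w_1,\dots,w_T\}$. The stream begins with the $p$ input-independent ``rail'' edges $(u_i,v_i)$, which either player can insert without incurring communication cost. Alice, holding $x \in \{0,1\}^p$ of Hamming weight $r$, then streams the edges $(u_i,w_k)$ for every $i$ with $x_i=1$ and every $k \in [T]$. Bob, holding $y$, analogously streams $(v_i,w_k)$ whenever $y_i=1$. If there is some $i^*$ with $x_{i^*}=y_{i^*}=1$, then for each $k$ the three edges $(u_{i^*},v_{i^*})$, $(u_{i^*},w_k)$, $(v_{i^*},w_k)$ form a triangle, so the graph has at least $T$ triangles. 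If $x$ and $y$ are disjoint, no index can simultaneously supply a $U$-side and a $V$-side edge into $W$, so the graph is triangle-free, since every possible triangle must sit on some rail $(u_i,v_i)$ together with some $w_k$.

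Counting edges, $m = 2rT + p$. Taking $r = \lfloor p/2 \rfloor$ (well within the regime $r \le p/2$ where $\disj_p^r$ has communication complexity $\Omega(r)$), we have $m = \Theta(rT)$ and hence $r = \Theta(m/T)$. Composing with the pass-to-communication simulation yields $s = \Omega(r) = \Omega(m/T)$, which is the claimed bound.

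The main obstacle, such as it is, lies not in any single step but in lining up the parameters so the bound is honestly $\Omega(m/T)$ rather than weaker: one must verify that $p$ does not swamp $rT$ in the edge count (which is why taking $r = \Theta(p)$ is the right choice), that the construction realizes the desired $(m,T)$ regime (in particular $T \le m$ and $n = 2p+T$ reasonable), and that the $P$-pass to $O(P)$-round simulation is stated for multi-round interactive protocols so that one can legitimately invoke the multi-round $\Omega(r)$ bound for disjointness from Kushilevitz--Nisan. None of these is deep, but they must be tracked carefully so that the final $\Omega(m/T)$ is not silently eroded by constants absorbed into the edge count or by a mismatched round-complexity bound.
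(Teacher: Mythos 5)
Your reduction is correct and takes essentially the approach the paper attributes to Braverman et al.: the lemma is stated there as a citation without proof, the paper only noting that binary strings are encoded into a graph with either $0$ or at least $T$ triangles so that hardness follows from multi-round set disjointness, which is exactly what your rail-plus-$W$ gadget implements (and is the $\Omega(m/T)$-regime analogue of the construction the paper itself spells out for Theorem~\ref{thm:lb2}). Your parameter bookkeeping ($m = p + 2rT$, $r = \Theta(p) = \Theta(m/T)$, and the constant-pass-to-$O(s)$-communication simulation) is also handled correctly.
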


\begin{theorem} 
\label{thm:lb1}
Any constant pass streaming algorithm for $\dist(T)$ 
requires $\Omega(\frac m{T^{2/3}})$ space.
\end{theorem}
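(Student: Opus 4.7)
The plan is to amplify the $\Omega(m/T)$ bound of Lemma~\ref{lem:bovlb2} by an additional factor of $T^{1/3}$ through a direct-sum reduction in which many smaller hard $\dist$ instances are stitched together. Concretely I would fix $k = T^{1/3}$ and apply Lemma~\ref{lem:bovlb2} on a base instance with $m' = m/k$ edges and threshold $T' = T/k$; per base instance the lemma (ultimately a reduction from $\disj_p^r$ with $r = \Theta(m'/T') = \Theta(m/T)$) yields an $\Omega(m'/T') = \Omega(m/T)$ space lower bound arising from an $\Omega(r)$ communication lower bound on $\disj_p^r$.

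Next I would take $k$ vertex-disjoint copies of that base hard graph and feed them into a single edge stream as $k$ contiguous blocks on disjoint vertex sets. Because the copies share no vertices, the total triangle count is exactly the sum of the per-copy counts. Restricting the input family to inputs in which every copy sits in the same state---either all triangle-free, or every copy containing at least $T'$ triangles---the composite graph has $km' = m$ edges and either $0$ or at least $kT' = T$ triangles, so it is a legitimate $\dist(T)$ instance. Any constant-pass algorithm for $\dist(T)$ on this family must therefore distinguish the all-disjoint from the all-intersecting case of $k$ parallel copies of the underlying $\disj_p^r$ problem.

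Finally I would invoke the direct-sum property for the randomized communication complexity of set disjointness: the ``all-intersect vs.\ all-disjoint'' version of $k$-fold $\disj_p^r$ requires $\Omega(kr)$ bits of total communication, by the information-complexity direct sum for $\disj$ in the style of Bar-Yossef, Jayram, Kumar, and Sivakumar. Since a $c$-pass streaming algorithm with space $S$ simulates an $O(c)$-round two-party protocol exchanging $O(cS)$ bits, we conclude $S = \Omega(kr) = \Omega(k \cdot m/T) = \Omega(m/T^{2/3})$ for constant $c$.

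The main obstacle I expect is the direct-sum step. The restricted problem arising from the composite graph is an AND-like promise version---``all $k$ instances intersect'' versus ``all $k$ are disjoint''---and one must ensure this promise does not inadvertently make the problem much easier than unrestricted $k$-fold disjointness (for instance, collapsing it to an OR detectable from a single witness). This is handled by analysing the information cost against a hard product distribution on the $k$ independent $\disj_p^r$ inputs, where each coordinate contributes $\Omega(r)$ information to the transcript; care is needed to verify that the vertex-disjoint concatenation of streams does not leak cross-instance structural information and that the streaming-to-communication simulation preserves the resulting information cost bound.
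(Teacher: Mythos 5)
There is a genuine gap at the direct-sum step, and it is exactly the one you flagged but then waved away. Under your promise --- every one of the $k$ blocks is intersecting, or every one is disjoint --- the composite communication problem is no harder than a \emph{single} copy of $\disj_{p}^{r}$: Alice and Bob simply run an optimal protocol on the first block and output its answer, which is correct on the entire promise set. This gives an explicit protocol with $O(r)=O(m/T)$ communication, so the claimed $\Omega(kr)$ lower bound for the promise problem is false, and no information-complexity analysis can resurrect it. The product distribution you propose to analyze does not satisfy the promise (under independent coordinates a typical input has some intersecting and some disjoint blocks), while any distribution supported on the promise has fully correlated coordinates, which is precisely the situation in which direct-sum decompositions of information cost break down. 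Dropping the promise does not help either: if only some blocks intersect, the total triangle count lands strictly between $0$ and $T$, so the composite graph is not a legal $\dist(T)$ instance and the streaming algorithm owes you nothing on it. The net effect is that your reduction cannot certify anything beyond the $\Omega(m/T)$ bound you started from.

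The paper's proof amplifies within a single instance rather than across parallel instances, which sidesteps the direct-sum issue entirely. Starting from the $\dist(1)$-hard family of Lemma~\ref{lem:bovlb2} (which requires $\Omega(m)$ space), each vertex is replaced by $T$ copies and each edge by the complete bipartite graph between the corresponding groups. This multiplies the edge count by $T^2$ and the triangle count by $T^3$, is performable on the fly with $O(1)$ extra space, and preserves triangle-freeness; re-parameterizing in terms of the new edge count $M=mT^2$ and new threshold $\tau=T^3$ gives $M/\tau^{2/3}=m$, so an $o(M/\tau^{2/3})$-space algorithm for $\dist(\tau)$ would solve $\dist(1)$ in $o(m)$ space, a contradiction. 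If you want to salvage your outline, the fix is to replace the parallel composition with this kind of per-vertex blow-up of one hard instance.
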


\begin{proof}  
Given a graph $G=(V,E)$ with $m$ edges
 we can create a graph $G'=(V',E')$ with $mT^2$ edges and $t(G') =
 T^3t(G)$. 
We do so
 by replacing each vertex $v \in V$ with $T$ vertices $\{v_1,\ldots,v_T\}$ and replacing
 the edge $(u,v) \in E$ with the edge set $\{u_1,\ldots,u_T\}\times \{v_1,\ldots,v_T\}$.
 Clearly any triangle in $G$ will be replaced by $T^3$ triangles
 in $G'$ and every triangle in $G'$ corresponds to a triangle in $G$.
Moreover this reduction can be peformed in a streaming fashion using $O(1)$ space.
Therefore a streaming algorithm for $\dist(T)$ 
using $o(\frac m{T^{2/3}})$ (applied to $G'$) 
would imply an $o(m)$ streaming algorithm for $\dist(1)$. 
But from Lemma \ref{lem:bovlb2}, we have that 
$\dist(1)$ requires $\Omega(m)$ space for constant pass algorithms. 
This is a contradiction and as a result our claim is proved.
\end{proof}

Our next lower bound more directly shows the hardness by a reduction to the
hard communication problem of $\disj_p^{r}$. 

\begin{theorem} 
\label{thm:lb2}
For any $\rho > 0$ and $T \le n^2$, there is no constant pass streaming algorithm for $\dist(T)$ 
that takes $O(\frac m{T^{1/2+\rho}})$ space.
\end{theorem}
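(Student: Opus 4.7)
The plan is to reduce from the communication problem $\disj_p^{p/2}$, which requires $\Omega(p)$ bits over any constant number of rounds. Given subsets $S_A, S_B \subseteq [p]$ of weight $p/2$ held by Alice and Bob, I will produce a graph stream such that the resulting graph is triangle-free when $S_A \cap S_B = \emptyset$ and contains at least $T$ triangles when $|S_A \cap S_B| \ge 1$. A $k$-pass streaming algorithm for $\dist(T)$ using $s$ bits of memory then yields a $(2k-1)$-round protocol of total cost $O(ks)$, by Alice and Bob alternately replaying their portions of the stream and exchanging the algorithm's state; for constant $k$ this forces $s = \Omega(p)$.

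The construction is a ``blow-up'' of the standard one-bit disjointness-to-triangle gadget. Let $N = \lceil \sqrt{T}\, \rceil$. Partition the $n$ vertices into sets $A, B, C$ with $|A| = |B| = N$ and $|C| = p := n - 2N$. All $N^2$ edges of the bipartite clique $A \times B$ are streamed up front (placed by, say, Alice, since they are independent of $S_A$ and leak no information). Alice then streams the edges $\{(a, c_i) : a \in A,\ i \in S_A\}$, and Bob streams $\{(b, c_i) : b \in B,\ i \in S_B\}$. Since all edges lie between different parts, any triangle must take the form $(a, b, c_i)$, and such a triangle is present iff both $(a, c_i)$ and $(b, c_i)$ are present, i.e.\ iff $i \in S_A \cap S_B$. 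Each such $i$ contributes exactly $N^2 \ge T$ triangles, so disjoint inputs give a triangle-free graph while any intersection gives at least $T$ triangles.

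Counting parameters, the constructed graph has $n$ vertices and $m = N^2 + 2 N \cdot (p/2) = T + p\sqrt{T}$ edges. Provided $T \le n^2/4$ (within the regime $T \le n^2$ allowed by the theorem), we get $p = \Theta(n)$, hence $m = \Theta(n\sqrt{T})$, matching the claimed regime $m = \Theta(n\sqrt T)$. The $\Omega(p)$ communication bound translates to $s = \Omega(p) = \Omega(n) = \Omega(m/\sqrt{T})$. For any constant $\rho > 0$ this contradicts an alleged $O(m/T^{1/2+\rho})$ upper bound, since $m/T^{1/2+\rho} = (m/\sqrt{T}) \cdot T^{-\rho} = o(m/\sqrt T)$ as $T$ grows.

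The main design point that took thought is calibrating the gadget so that the $\Omega(p)$ bits of disjointness-hardness surface at the $\Omega(m/\sqrt{T})$ level rather than the weaker $\Omega(m/T)$ level one would obtain from a naive one-bit-per-triangle reduction: each index in $C$ must control $\sqrt{T}$ edges on both the $A$ and $B$ sides and, when jointly activated, enable a full $N^2 = T$ triangles. Once this ratio is fixed, the remainder is routine: partition the stream between Alice and Bob as above, run the hypothetical streaming algorithm with its state exchanged across passes, and invoke the $\disj_p^{p/2}$ lower bound.
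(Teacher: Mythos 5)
Your construction is essentially the paper's own reduction up to relabeling: the paper also builds a tripartite gadget from $\disj_n^{n/2}$ with one part of size $n$ indexed by the disjointness universe, two parts of size $\sqrt{T}$ joined by a complete bipartite graph, and each player's set controlling the edges between the universe part and one small part, yielding $T$ triangles per intersection element, $m=\Theta(n\sqrt{T})$ edges, and the $\Omega(n)=\Omega(m/\sqrt{T})$ bound. The argument is correct; the only nitpick is that your choice $p=n-2N$ degenerates at the boundary $T=n^2/4$ (where $p=0$), so you should take, say, $T\le n^2/9$ or simply not insist on exactly $n$ vertices, as the paper does.
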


\begin{proof} 
We show that there are families of graphs
with $\Theta(n\sqrt{T})$ edges and $T$ triangles such that distinguishing them
from triangle-free graphs in a constant number of passes requires $\Omega(n)$ space.
This is enough to prove our theorem. 

We use a reduction from the standard set intersection problem, here denoted by $\disj^{n/2}_n$.
Given $y \in \{0,1\}^n$, Bob constructs a bipartite graph $G=(A \cup B,E)$ where
$A = \{a_1,\ldots,a_n\}$ and $B=\{b_1,\ldots,b_{\sqrt{T}}\}$. 
He connects $a_i$ to all vertices in $B$ iff $y[i]=1$. 
On the other hand, Alice adds vertices $C=\{c_1,\ldots,c_{\sqrt{T}}\}$
to $G$. 
She adds the edge set $C \times B$. 
Also for each $i \in [\sqrt{T}]$ and $j \in [n]$, she adds the edge
 $(c_i,a_j)$ iff $x[j]=1$. 
We observe that if $x$ and $y$ (uniquely) intersect there will be
precisely $T$ triangles passing through each vertex of $C$. 
Since there is no edge between the vertices in $C$, in total we will have
 $T$ triangles. 
On the other hand, if $x$ and $y$ represent disjoint sets,
 there will be no triangles in $G$. 
In both cases, the number of edges is between $2n\sqrt{T}$ and
$3n\sqrt{T}$, over $O(n)$ vertices (using the bound $T^2 \leq n$). 
 Considering the lower bound for the $\disj_p^{r}$
 (Section~\ref{sec:prelims}), our claim is proved following a standard
 argument: a space efficient streaming algorithm would imply an efficient
 communication protocol whose messages are the memory state of the
 algorithm. 
\end{proof}

\paragraph{Acknowledgments}  
We thank Andrew McGregor, Srikanta Tirthapura and Vladimir Braverman
for several helpful conversations. 
We also thank Andrew McGregor and Sofya Vorotnikova for alerting us to the error in the first version of this work. 

\bibliographystyle{amsalpha}
\bibliography{references}{}

\end{document}